\pgfplotsset{compat=1.12}
\newcommand{\link}[2]{\href{#1}{\color{blue}#2}}
\newcommand{\hide}[1]{{}}
\newcommand{\lineref}[1]{{Line~\ref{#1}}}
\newtheorem{remark}{Remark}
\newtheorem{claim}{Claim}
\newtheorem{proof}{Proof}
\renewcommand{\ALG@beginalgorithmic}{\small}
\begin{document}

\date{}

\title{\Large \bf Bizur: A Key-value Consensus Algorithm for Scalable File-systems}

\author{
{\rm Ezra N. Hoch}\\
ezra.hoch@elastifile.com
\and
{\rm Yaniv Ben-Yehuda}\\
yaniv.benyehuda@elastifile.com
\and
{\rm Noam Lewis}\\
noam.lewis@elastifile.com
\and
{\rm Avi Vigder}\\
avi.vigder@elastifile.com
} 

\maketitle


\subsection*{Abstract}
Bizur is a consensus algorithm exposing a key-value interface.
It is used by a distributed file-system that scales to 100s of servers, delivering
millions of IOPS, both data and metadata, with consistent low-latency.

Bizur is aimed for services that require strongly consistent state, but do not require
a distributed log; for example, a distributed lock manager or a distributed service locator.
By avoiding a distributed log scheme, Bizur outperforms
distributed log based consensus algorithms, producing more IOPS and
guaranteeing lower latencies during normal operation and especially
during failures.

Paxos-like algorithms (e.g., Zab and Raft) which are used by existing distributed file-systems,
can have artificial contention points due to their dependence on a distributed log.
The distributed log is needed when replicating a general service, but when the desired
service is key-value based, the contention points created by the distributed log
can be avoided.

Bizur does exactly that, by reaching consensus independently on independent keys.
This independence allows Bizur to handle failures more efficiently and to scale much better than
other consensus algorithms, allowing the file-system that utilizes Bizur to scale with it.

\section{Introduction}
File-systems are a very convenient way to access data. As systems and data centers become larger,
their requirements from file-systems scale as well. The crux of scaling a file-system is to
scale while providing strong consistency.

File-system operations can be divided into two
main categories: data operations, such as \textit{read} and \textit{write}, and metadata operations,
like \textit{create file} and \textit{lookup}. Usually, it is easier to scale the data operations,
while scaling the metadata operations is harder. The main reason being that preserving metadata
consistency is more subtle than preserving data's consistency. For example, two writes to different
files cannot affect each other, while two \textit{rename} operations on different directories could
affect each other - since they might create a loop in the directory structure.

Many distributed file-systems utilize Paxos as their configuration service or even as their
metadata service. Examples of such file-systems include Ceph \cite{weil2006ceph,cephusingpaxos},
GFS (Google Filesystem) via its Chubby service \cite{burrows2006chubby},
XtreemFS \cite{hupfeld2008xtreemfs,xtreemfsusingpaxos}, Infinit \cite{infinituseingpaxos}
and Nutanix \cite{nutanixbible}.

In this paper we present Bizur, a new key-value consensus algorithm. Bizur is used as
an alternative for Paxos,
both as a configuration service and especially as the underlying infrastructure
for a scalable metadata service. Bizur's main properties are: high-concurrency, linear scalability and
low latency while preserving strict consistency.

\subsection{Motivation}\label{subsec:motivation}
At \link{http://www.elastifile.com}{Elastifile} we've built a distributed scale-out file-system,
that is designed to scale to 100s and 1000s of servers. To achieve that, the file-system's metadata services
are truly distributed: each file or directory can be owned by a different server,
and can migrate as the workload changes.
This fine-grained dynamic metadata ownership allows to scale linearly both the data and the metadata operations.

Since files and directories are dynamically owned, there is a distributed
repository that keeps track of which server is
the owner of which file / dir. This service is called the ownership repository coordinator, a.k.a. ORC.
The ORC service must provide a (strongly) consistent view of its state, since the consistency of the
entire file-system depends on the ORC.

Many user-initiated
requests, especially metadata requests, could potentially query the ORC service to know which server is handling the relevant file / dir.
Therefore, the concurrency
of operations against the ORC service could be very high. In addition, our file-system is required to provide consistent performance,
with the lowest possible degradation during failure and recovery.

Previous consensus algorithms like Paxos \cite{lamport2001paxos}, Zab \cite{junqueira2011zab} or Raft \cite{ongaro2014search} don't handle these requirements well enough
(see Subsection~\ref{subsection:distlog}). Bizur was created to provide the high-concurrency
and the consistent low-latency that is required by Elastifile's services. Subsection~\ref{subsection:bizurway}
gives an overview on how Bizur achieves that.

Bizur is also used for Elastifile's Cluster Database (ECDB); this service provides a persistent,
consistent and fault-tolerant view of the cluster's state (sometimes called ``configuration service'').
The main difference between the usage of Bizur by the ORC service and by the ECDB service is that
the ORC service is in-memory, while the ECDB is persisted. This allows the ORC service
to be blazingly fast, without harming the consistency of the service.
Regular failures are handled by Bizur's fault tolerance, and total power failure
is handled by reconstructing the ORC's state from information persisted during transactions\footnote{Exact specification of the ORC service is outside the scope of this paper.}.

\begin{remark}
The results presented in
Section~\ref{sec:experimental} refer to the persisted version of Bizur (the ECDB service), so that
they can be compared fairly with the other persisted services: etcd and ZooKeeper. The latencies of
the ORC service are much faster, as it doesn't need to write to disk.
\end{remark}

\subsection{Distributed Log Drawbacks}\label{subsection:distlog}
Consensus algorithms like Paxos, Zab and Raft
are all based on a distributed log. Operations are written to the log, and the state of the system is
advanced by applying the log entries one after the other. Therefore, to reach the state relevant to
some log entry {\it n}, all the log entries preceding it must be applied first.
This creates two kinds of dependencies between entires: ``false'' dependencies between a log
entry and its preceding ones, and ``real'' dependencies between a log entry affecting some content $x$
and preceding entries that affect the same content $x$.

Real dependencies have the following drawback: to be able to read some content $x$, we must
first find all the log entries relating to $x$, and apply them in order. Thus, if some
server has been disconnected for some time, and is now reconnected, it must first recover multiple
log entries before it can access the content of $x$. There is no clear bound on how many such entries
the server must recover. This drawback manifests itself mainly when there are partial disconnections,
and the service continues to advance while some of the servers are not updated.

False dependencies between log entries have two additional different drawbacks. First, a single slow operation
will increase the latency of all succeeding operations, until the slow operation is committed.
For example, a network packet drop will
affect multiple ongoing operations (instead of affecting just the operation within the dropped packet).
Second, during recovery, the distributed log must be recovered and replayed, which can take a
long time (the time it takes to recover and replay the log depends on additional choices, like the rate at which checkpoint is performed).
Both of these drawbacks hamper the ability to deliver consistent latency for user-initiated
IOs.

There is an additional drawback to the high cost of recovery: longer timeouts for failure detection.
To see why, consider the following: on the one hand we want short timeouts to detect real failures
quickly. On the other hand, if the timeouts are too quick we pay the cost of recovery when we
didn't really need to, since it wasn't a real failure (for example, a short network failure,
or a loaded server that didn't get running time for the consensus process).
Thus, there is a trade-off between the time to overcome a real failure and the number of falsely-detected failures. When the cost of
false detection is high, systems have longer timeouts to detect real failures, and thus take longer to return
to normal operation.

Moreover, replicating state using a distributed log means that each replica has some ``base state''
on top of which the log is replayed. This base-state must be updated periodically, to avoid maintaining
an ever growing log. Thus, distributed log based algorithms have an additional flow of log-compaction
(also call checkpointing or snapshotting). During this process the servers agree on some new base-state,
and can erase the log leading up to that state. Log compaction is complex with non-trivial trade-offs
\cite{raftlogcompaction, chandra2007paxos}, and requiring it is another drawback.

To sum up, the drawbacks that consensus algorithms based on a distributed log have are:
a) reading an object requires having all related preceding entries available,
b) a slow operation will affect unrelated succeeding operations,
c) time to detect a failure is longer,
d) once a failure is detected, recovery can take a long time, and
e) it requires handling the complex flow of log-compaction.

\subsection{Our Contribution}\label{subsection:bizurway}
Bizur was created to overcome the aforementioned drawbacks. Bizur is a consensus algorithm that
ensures consistent low latency of operations. It allows for independent operations to run concurrently
without affecting each other (no false dependencies); and it guarantees a constant latency addition
in case of failures (or leader change).
This allows us to have a very low timeout of failure detection; in our implementation, the timeout is 100ms.

Bizur achieves these features by avoiding the usage of a distributed log. Since there is no log,
there is no need for one operation to wait for an unrelated operation to be written to the log,
and no need to wait for log recovery upon failure. As a bonus, there is also
no need to handle log compaction, which can be quite arduous (see \cite{raftlogcompaction, chandra2007paxos}).

The main difference between Bizur and the other consensus algorithms (e.g., Paxos),
is that the distributed log based consensus algorithms solve a more general problem. They are
a generalized solution, that enables consensus for any data model. Bizur, on the other hand, is
optimized for a specific use-case: a strongly-consistent distributed key-value store. Since
Bizur is utilized for a specific use-case, it makes stronger assumptions about the data model,
and can avoid unneeded contention points.

In many cases (i.e., etcd \cite{etcd} or ZooKeeper \cite{hunt2010zookeeper}) key-value
stores are build \textbf{on top of} a distributed log based consensus algorithms (Raft in the case of etcd,
and Zab in the case of ZooKeeper). Those systems pay for the price of a generalized data model,
while they in fact support a much simpler one. Bizur merges the lower-level consensus algorithm with
the higher-level key-value data model. The resulting key-value consensus has concurrency and low-latency
properties that don't exist in systems that are based on the more generalized consensus.

Bizur exploits the simplified key-value data model, by allowing operations on different keys
to advance independently. It can be thought of as if each key has its own distributed log consensus
algorithm. However, that would be \textbf{very} inefficient. Moreover, once each key is updated independently,
we don't really need a log anymore. Instead, Bizur uses shared-memory like
constructs (see \cite{attiya1995sharing}) to reach consensus on the value of each key. Thus
avoiding the need for a state's history of each key-value pair,
and hence not needing a distributed log at all.

The straightforward mapping of keys to multiple ``small'' consensuses (e.g., a consensus instance per key)
has its difficulties (see Subsection~\ref{sec:replication}). Bizur introduces a more subtle mapping,
by hashing the keys into predefined buckets, on which consensus is achieved.

An additional gain of Bizur's key-value data model is that scaling and sharding can be done at the Bizur
level, and not at the application level. When working with distributed log based consensus algorithm
sharding is an application task; the application needs to decide according to what it performs the sharding.
When scaling, the application will need to handle creating new instances of the consensus algorithm to handle the
different shards. This means that each application needs to ``reinvent the wheel'' regarding scaling and sharding.
Due to Bizur's simplified
data model, sharding is trivial, and is done by hashing the key. Thus, scaling and shard management are implemented
once in the Bizur algorithm, and can be reused multiple times by various applications.

\section{The Bizur Algorithm}
The Bizur algorithm is designed to run in real-world environments. More formally, it assumes
asynchronous message passing and supports message drops as well as crash-stop failures.
It does not handle Byzantine failures.

The Bizur algorithm exposes a key-value interface, containing the \textit{get}, \textit{set}, \textit{delete}
and \textit{iterate keys} operations. Each key is mapped into a bucket, and the bucket is replicated
across the cluster. The buckets are independent of each other, allowing for high-concurrency of
operations. Operations on keys that map into the same bucket are serialized.
The number of buckets depends on the number of keys supported by the system, and on the desired
density of keys-to-buckets.

Bizur achieves consensus on each bucket in a similar way to an atomic register \cite{attiya1995sharing}.
To ensure atomicity, Bizur has a single leader per bucket.
For simplicity, we use the same leader for all buckets. We continue the discussion assuming a
single leader; it is easy to extend Bizur to have different leaders for different buckets.


A Bizur's run is split into phases called \textit{elections}. For each election there is at
most one leader. There can be elections with no leader. The leader is responsible for receiving requests,
processing them and replicating the affected bucket.

\begin{remark}\label{remark:brevity}
The algorithmic description assumes the following (the exact details are removed for brevity):
\begin{itemize}
    \item When the leader sends a message to the cluster and awaits a majority
    of responses, it is assumed there is some identifier that associates the responses to the original
    message.
    \item When the leader sends to ``all'', it also sends to itself (as a replica), and this message
          is assumed to be received.
    \item Timeouts are handled as if the leader received a \textsc{Nack} message from the relevant server.

    \item Operations on the same bucket are serialized (e.g., protected by a {\it mutex}).
\end{itemize}
\end{remark}

\begin{algorithm}[t]
\caption{Leader Election}\label{leaderelection}
\begin{algorithmic}[1]
\Procedure{StartElection}{}
\State $\textit{elect\_id} \gets \textit{elect\_id}+1$
\State \textbf{send} \textsc{PleaseVote}(\textit{elect\_id}, \textit{self}) to all
\If {received \textsc{AckVote} from majority}
\State \label{election:true} $\textit{is\_leader} \gets \textit{true}$
\EndIf
\EndProcedure

\Procedure{PleaseVote}{\textit{elect\_id}, \textit{source}}
\If {\textit{elect\_id} $>$ \textit{voted\_elect\_id}}
\State $\textit{voted\_elect\_id} \gets \textit{elect\_id}$
\State $\textit{leader} \gets \textit{source}$
\State \textbf{reply} \textsc{AckVote} to \textit{source}
\ElsIf {\textit{elect\_id} = \textit{voted\_elect\_id} \textbf{and}
\State \text{ ~ ~ } \textit{source} = \textit{leader}}
\State \textbf{reply} \textsc{AckVote} to \textit{source}
\Else ~\textbf{reply} \textsc{NackVote} to \textit{source}
\EndIf
\EndProcedure
\end{algorithmic}
\end{algorithm}

\subsection{Leader Election}\label{subsec:leaderelection}
Bizur requires at most one leader in each election. To achieve that, each server $p$ keeps track
of two variables: \textit{elect\_id} which is the highest election that $p$ tried to get elected in,
and \textit{voted\_elect\_id} which is the highest election that $p$ voted in.

A new election is initiated at server $p$ by calling \textsc{StartElection}, which increments \textit{elect\_id},
and requests the cluster for votes. If enough servers replied with \textsc{AckVote}, then $p$ marks
itself as leader.

A request for a vote is done by sending \textsc{PleaseVote}. When server $q$ receives \textsc{PleaseVote}
from $p$, it checks if it is a new election and if so $q$ records
its vote for $p$; otherwise, $q$ returns \textsc{NackVote} to $p$.

Algorithm~\ref{leaderelection} contains
the pseudo code for the leader election flow.

\begin{claim}\label{claim:leaderelection_safety}
For every \textit{elect\_id} there is at most one server that marks \textit{is\_leader}
as \textit{true} (Line~\ref{election:true}).
\end{claim}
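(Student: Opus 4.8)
The plan is to argue by contradiction via a standard quorum-intersection argument. Suppose two distinct servers $p_1 \neq p_2$ both reach \lineref{election:true} with the same value $e = \textit{elect\_id}$. By inspection of \textsc{StartElection}, each $p_i$ only sets $\textit{is\_leader} \gets \textit{true}$ after receiving \textsc{AckVote} from a majority of servers in response to its own \textsc{PleaseVote}$(e, p_i)$ message. Call these two majorities $Q_1$ and $Q_2$. Since any two majorities of the cluster intersect, there exists a server $q \in Q_1 \cap Q_2$ that replied \textsc{AckVote} to \emph{both} $p_1$ and $p_2$ for election $e$.

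Next I would derive a contradiction from the behavior of $q$ in the \textsc{PleaseVote} handler. The key observation is that whenever $q$ replies \textsc{AckVote}, the pair $(\textit{voted\_elect\_id}, \textit{leader})$ at $q$ equals $(e, p_i)$ immediately afterward: in the first branch $q$ explicitly assigns these values before replying, and in the second (\textsc{ElsIf}) branch the reply is only sent when they already equal $(e, p_i)$. So just after $q$'s \textsc{AckVote} to $p_1$ we have $(\textit{voted\_elect\_id}, \textit{leader}) = (e, p_1)$ at $q$, and just after its \textsc{AckVote} to $p_2$ we have $(e, p_2)$. Since $p_1 \neq p_2$, these two states are different, so the two \textsc{AckVote} events are distinct and occur at different times; without loss of generality say the reply to $p_1$ happens first.

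The crux is then to show $q$'s \textsc{leader} variable cannot change from $p_1$ to $p_2$ while \textit{voted\_elect\_id} stays at $e$. I would establish two monotonicity facts about the \textsc{PleaseVote} handler: (i) \textit{voted\_elect\_id} is non-decreasing over time, and (ii) \textit{leader} is only ever reassigned in the first branch, which simultaneously strictly increases \textit{voted\_elect\_id}. Consequently, once $q$ has $(\textit{voted\_elect\_id}, \textit{leader}) = (e, p_1)$, any later state with $\textit{voted\_elect\_id} = e$ must still have $\textit{leader} = p_1$ (a change of \textsc{leader} would have required bumping \textit{voted\_elect\_id} above $e$, and by (i) it can never come back down to $e$). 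This contradicts the state $(e, p_2)$ with $p_2 \neq p_1$ holding at $q$ after its \textsc{AckVote} to $p_2$.

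The main obstacle I anticipate is being careful about the interleaving of events at $q$: $q$ may process many \textsc{PleaseVote} messages, possibly for elections other than $e$, between its two relevant \textsc{AckVote} replies, so the argument must reason about \emph{all} intermediate states rather than assuming the two \textsc{AckVote} handlers run back-to-back. This is handled cleanly by the two monotonicity invariants above, which hold after every handler invocation and are each proved by a one-line case split on the three branches of \textsc{PleaseVote}; the remaining subtlety is just confirming that $e = \textit{elect\_id}$ is exactly the value carried in the \textsc{PleaseVote} message and the \textsc{AckVote} correlation identifier (Remark~\ref{remark:brevity}) ensures $q$'s reply is attributed to the correct $(e, p_i)$.
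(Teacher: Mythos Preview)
Your proposal is correct and follows the same quorum-intersection argument as the paper's proof sketch: two majorities must share a server $q$, and $q$ cannot have sent \textsc{AckVote} to two distinct candidates for the same \textit{elect\_id}. The paper simply asserts the key fact ``every server $q$ votes for at most one server $p$'' without justification, whereas you supply the monotonicity invariants (i) and (ii) that actually establish it; your version is strictly more detailed but not a different route.
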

\begin{proof}
    Sketch:
    For every \textit{elect\_id}, every server $q$ votes for at most one server $p$. A server can
    execute Line~\ref{election:true} only if it received \textsc{AckVote} from a majority of the servers.
    If two servers $p, p'$ receive \textsc{AckVote} from a majority of the servers, then there is a server
    $q$ that sent \textsc{AckVote} to $p$ and $p'$. Since $q$ votes for at most one server $p$ (for the
    given \textit{elect\_id}), we
    conclude that $p=p'$.
\end{proof}

Claim~\ref{claim:leaderelection_safety} shows that Algorithm~\ref{leaderelection} is \textit{safe}.
Its \textit{liveness} depends on the actual failure pattern; from our experience, it works
well in practice.

\subsection{Replication}\label{sec:replication}
Bizur's replication is based on the concept of a SWMR register (see \cite{attiya1995sharing} for more details).
Our implementation processes all reads at the leader server, slightly simplifying the flow (since usually there
is a single reader, instead of multiple readers).

Replication is done per bucket (and not per key-value pair), to handle deleted entries. If replication
is done on a key-value pair basis, it would mean that a new key would require a new instance of the
SWMR register, and deleting a key would require relinquishing the SWMR instance. Suppose there are
multiple concurrent sets and deletes on the same key: how can we distinguish between the old and new
instances of the SWMR register?

Bizur solves this problem by defining a priori a fixed number of long-lived buckets, each of which has a
SWMR instance. Each key-value pair is hashed into a bucket, and the updated bucket is then replicated.
Since there are no creations and/or deletions of SWMR instances, the aforementioned problem doesn't exist.

Algorithm~\ref{writebucket} describes the write flow of buckets, and Algorithm~\ref{readbucket}
describes the read flow of buckets. Following is a quick overview of these algorithms. The
\textsc{Write} and \textsc{Read} operations are internal to the Bizur servers, and are not exposed
to the Bizur client.

Buckets are identified by their $index$ (recall there is a fixed number of buckets).
Each bucket has a version $ver$, which is a tuple composed of: a) $elect\_id$ which is the leader's
$elect\_id$ when the bucket was written, and b) $counter$, which increases by one each time the
bucket is written. When there is a new $elect\_id$ the $counter$ is zeroed.

When writing a bucket (see \textsc{Write}) the leader sets the bucket's version and sends the
bucket to all the servers in the Bizur cluster. If the leader receives enough acks, it considers the
write successful; otherwise, it relinquishes its leadership, since it does not have a majority of servers
that think its the leader.

Each replica, when receiving a write request, compares the bucket's
$ver.elect\_id$ to the replica's $voted\_elect\_id$. If the bucket is more up-to-date
the replica updates its ``voting history'' and updates its local view of the bucket.
Since Bizur assumes non-Byzantine failures, the leader
was indeed voted in by a majority of servers, thus updating a server's vote to the majority's vote
won't affect the \textit{safety} of the leader election. If the bucket's version is older than
the replica's latest vote, the replica rejects the write; notifying the originator leader that it
is no longer the leader.

\begin{algorithm}[t]
\caption{Bucket Replication: Write}\label{writebucket}
\begin{algorithmic}[1]
\Procedure{Write}{\textit{bucket}}\Comment{By leader}
\State $\textit{bucket.ver.elect\_id} \gets \textit{elect\_id}$
\State $\textit{bucket.ver.counter} \gets \textit{bucket.ver.counter}+1$
\State \textbf{send} \textsc{ReplicaWrite}(\textit{bucket}, \textit{self}) to all
\If {received \textsc{AckWrite} from majority}
\State \textbf{return} \textit{true}
\Else
\State $\textit{is\_leader} \gets \textit{false}$
\State \textbf{return} \textit{false}
\EndIf
\EndProcedure

\Procedure{ReplicaWrite}{\textit{bucket}, \textit{source}}
\If {\textit{bucket.ver.elect\_id} $<$ \textit{voted\_elect\_id}}
\State \textbf{reply} \textsc{NackWrite} to \textit{source}
\Else
\State $\textit{voted\_elect\_id} \gets \textit{bucket.ver.elect\_id}$
\State $\textit{leader} \gets \textit{source}$ \Comment{``update'' vote}
\State $local\_buckets[bucket.index] \gets bucket$
\State \textbf{reply} \textsc{AckWrite} to \textit{source}
\EndIf
\EndProcedure
\end{algorithmic}
\end{algorithm}

Reading a bucket is slightly more involved, since it also recovers the bucket following
a leader change. First thing \textsc{Read} does is to ensure that the bucket was recovered
(see Subsection~\ref{subsection:recovery}). If the
bucket was recovered, it means the leader has successfully written its view of the bucket previously.
Thus, the leader can return the content of its local bucket (\lineref{line:return_local_content}).
All that is left is to ensure that the leader is indeed the leader, which is accomplished by
sending \textsc{ReplicaRead} to all replicas (\lineref{line:leader_still_leader_1}) and requiring
to receive a majority of acks (\lineref{line:leader_still_leader_2}).

Each replica, when receiving a read request, does the same ``voting history update'' as it does in
the write case. If the replica hasn't voted in a newer election, it will respond with the bucket
from its local view.

\hide{
\begin{remark}\label{remark:lockperbucket}
For simplicity, we disallow multiple writes on the same bucket concurrently, and therefore serialize
writes to the same bucket / key.
\end{remark}
}

\begin{algorithm}[t]
\caption{Bucket Replication: Read}\label{readbucket}
\begin{algorithmic}[1]
\Procedure{Read}{\textit{index}}\Comment{By leader}
\If {not \textsc{EnsureRecovery}(\textit{index}, \textit{elect\_id})}
\State \textbf{return} $\perp$
\EndIf
\State \label{line:leader_still_leader_1} \textbf{send} \textsc{ReplicaRead}(\textit{index}, \textit{elect\_id}, \textit{self}) to all
\If {received \textsc{AckRead} from majority} \label{line:leader_still_leader_2}
\State \label{line:return_local_content} \textbf{return} $local\_buckets[index]$
\Else
\State $\textit{is\_leader} \gets \textit{false}$
\State \textbf{return} $\perp$
\EndIf
\EndProcedure

\Procedure{ReplicaRead}{\textit{index}, \textit{elect\_id},  \textit{source}}
\If {\textit{elect\_id} $<$ \textit{voted\_elect\_id}}
\State \textbf{reply} \textsc{NackRead} to \textit{source}
\Else
\State $\textit{voted\_elect\_id} \gets \textit{elect\_id}$
\State $\textit{leader} \gets \textit{source}$ \Comment{``update'' vote}
\State \textbf{reply} \textsc{AckRead}($local\_buckets[index]$) to \textit{source}
\EndIf
\EndProcedure
\end{algorithmic}
\end{algorithm}

\begin{algorithm}[t]
\caption{Bucket Replication: Recovery}\label{ensurerecovery}
\begin{algorithmic}[1]
\Procedure{EnsureRecovery}{\textit{index}, \textit{elect\_id}}
\If {$elect\_id = local\_buckets[index].ver.elect\_id$}\label{line:verify_election}
\State \textbf{return} \textit{true}
\EndIf
\State \label{line:start_recovery} \textbf{send} \textsc{ReplicaRead}(\textit{index}, \textit{elect\_id}, \textit{self}) to all
\If {received \textsc{AckRead} from majority}
\State \label{line:select_bucket_1} $\textit{max\_ver} \gets \max\{bucket.ver ~|~ \text{received \textit{bucket}}\}$
\State \label{line:select_bucket_2} $bucket \gets \text{some } bucket \text{ s.t. } bucket.ver = max\_ver$
\State \label{line:write_bucket_start} $bucket.ver.elect\_id \gets elect\_id$
\State $bucket.ver.counter \gets 0$
\State \label{line:end_recovery}\textbf{return} \textsc{Write}($bucket$)
\Else
\State $\textit{is\_leader} \gets \textit{false}$
\State \textbf{return} \textit{false}
\EndIf
\EndProcedure
\end{algorithmic}
\end{algorithm}

\subsection{Recovery}\label{subsection:recovery}
Recovery occurs when the leader is replaced, either due to a failure, or because it was disconnected
for a long time. If a server that is not the leader fails (or is disconnected) no recovery occurs.
Recovery is on a per-bucket basis, and can thus occur concurrently and independently on multiple
buckets. Due to this property, Bizur performs the recovery lazily on the first request
(following the leader change) to a bucket.

\textsc{EnsureRecovery} (see Algorithm~\ref{ensurerecovery}) guarantees that a \textsc{Read} that
follows a leader change will first recover the bucket. \lineref{line:verify_election} checks
if the bucket has already been recovered, and if so returns immediately, shortcutting the somewhat
expensive distributed operation that follows. Notice that the local bucket's $elect\_id$ is updated
only during \textsc{Write}, and thus comparing it to the current $elect\_id$ (\lineref{line:verify_election})
does indeed tell us if the bucket is already recovered or not.

If the current $elect\_id$ is newer, then there was a leader change, and recovery is
performed (\lineref{line:start_recovery}-\lineref{line:end_recovery}). Recovery consists of
reading the bucket from a majority of replicas, selecting the bucket with the highest version
(\lineref{line:select_bucket_1}-\lineref{line:select_bucket_2}) and writing it back to the replicas
with an updated version (\lineref{line:write_bucket_start}-\lineref{line:end_recovery}).

Notice that following a leader change, the first read of the bucket will update the
bucket's version $elect\_id$. This ensures that the leader's bucket's version is the
highest among all versions at the replicas (for that specific bucket), since bucket's versions
are compared by first comparing the $elect\_id$ part, and since the leader has a higher
$elect\_id$ from all previous leaders. This flow is crucial for the correctness of Bizur, as
it ensures a value read by the leader won't change without an additional write by the leader. Without
this flow,
there can be ``hidden'' writes that later are seen by the leader, causing the value of the bucket
to change without a write from the leader.

\hide{
\subsection{Clients}\label{sec:clients}
Bizur clients play an active role in the Bizur algorithm. Clients are responsible for sending
requests to the leader, and initiating an election of a new leader if there is none.

Both of
these responsibilities are implemented in a single flow: the client remembers the last valid leader,
and tries sending requests to it. If there is a timeout, or if the client receives a \textsc{NotALeader}
message from that server, the client moves (in a round-robin manner) to the next server.
Once all servers have been tried, the client will request a new election on one of
the servers. By limiting the rate at which a client can initiate a new election, we ensure that
elections can finish successfully, before a new election is started.

For the above flow to work well, all server code paths can return \textsc{NotALeader} to the client. This
can happen if the leader relinquishes its leadership during an operation, or if it isn't a leader
when it received the client's request. For brevity, these paths are not present in the algorithmic
description.
}

\subsection{Key-Value}\label{subsection:keyvalue}
The Bizur clients expose key-value API: \textsc{Get, Set, Delete}
and \textsc{IterateKeys}. When a Bizur client receives a request, it forwards it to the leader.
Algorithm~\ref{keyvalue} described the flow of the operations at the leader.

As described in Section~\ref{sec:replication}, Bizur's internal replication uses buckets.
Algorithm~\ref{keyvalue} shows how the mapping between key-value to buckets occurs.

We assume a bucket encodes key-value pairs in some format.
The following helper functions are used to access this encoding:
$\textbf{hash}(key)$ returns the bucket's index which the $key$ hashes into,
$\textbf{decode}(bucket, key)$ returns the value referred by $key$ within the $bucket$,
$\textbf{encode\_set}(bucket, key, value)$ adds the $(key,value)$ pair to the $bucket$,
$\textbf{encode\_delete}(bucket, key)$ removes the $key$ (and referred value) from the $bucket$,
and $\textbf{decode\_keys}(bucket)$ returns the set of keys encoded in the $bucket$.

Notice that each of the key-value operations performs an internal \textsc{Read} operation. As
seen in Section~\ref{sec:replication}, the \textsc{Read} operation performs ``recovery'' for the
first read occurring after a leader change. Following operations will usually avoid the additional
cost incurred by the recovery (the additional write phase).

Our implementation supports conditional \textsc{Set} and \textsc{Delete}. That is, the mutating
operations can receive an expected value, and perform the set / delete only if the expected value
is indeed the one existing in the bucket. Since all operations go through the leader \hide{, and due to Remark~\ref{remark:lockperbucket}} it is straightforward to add this variant.
For brevity, the details are removed from Algorithm~\ref{keyvalue}.

\begin{algorithm}[t]
\caption{Key-Value API}\label{keyvalue}
\begin{algorithmic}[1]
\Procedure{Get}{\textit{key}}
\State $index \gets hash(key)$
\State $bucket \gets \textsc{Read}(index)$
\State \textbf{return} $decode(bucket, key)$
\EndProcedure

\Procedure{Set}{\textit{key}, \textit{value}}
\State $index \gets hash(key)$
\State $bucket \gets \textsc{Read}(index)$
\State $encode\_set(bucket, key, value)$
\State \textsc{Write}($bucket$)
\EndProcedure

\Procedure{Delete}{\textit{key}}
\State $index \gets hash(key)$
\State $bucket \gets \textsc{Read}(index)$
\State $encode\_delete(bucket, key)$
\State \textsc{Write}($bucket$)
\EndProcedure

\Procedure{IterateKeys}{}
\State $res \gets \emptyset$
\ForAll {$index$}
\State $bucket \gets \textsc{Read}(index)$
\State $res = res \cup decode\_keys(bucket)$
\EndFor
\State \textbf{return} $res$
\EndProcedure
\end{algorithmic}
\end{algorithm}

\subsection{Correctness}
Following is a sketch of the correctness proof of the Bizur algorithm.
Subsection~\ref{subsec:leaderelection} treats the leader election flow, and ensures that
there is at most one leader for every election.
Since each bucket is independent of the others, we can concentrate our analysis on a single $bucket$.

Due to the leader election properties, and due to \textsc{ReplicaWrite} and \textsc{ReplicaRead},
at most one leader is able to work with a majority of the servers. Thus, we can talk about ``the leader''
of $bucket$.

All operations go through the leader of $bucket$, so as long as there is no new election, the leader
can serialize the operations, ensuring linearizability of the bucket. The first operation a new leader
performs ``fixes'' the content of $bucket$, ensuring all future operations of the same leader are valid.
Recall that the leader performs one operation at a time for $bucket$ (Remark~\ref{remark:brevity}).
We're left with ensuring that a leader's change and the first operation of a the new leader maintains
linearizability.

Consider the first operation a new leader does:
\textsc{Write}: The leader will overwrite the previous content
    of $bucket$, thus ignoring the previous content. Notice that the $bucket.ver$ contains the
    new $elect\_id$ and is higher than any previous bucket version $bukcet$ had. This is crucial
    for the \textsc{Read} operation.

\textsc{Read}: The leader will first recover the previous content of $bucket$
(\lineref{line:start_recovery}-\lineref{line:end_recovery} in Algorithm~\ref{ensurerecovery}).
    As part of this recovery, the leader reads the most up to date previous value, updates the
    bucket version and writes $bucket$ back to
    the cluster. Ensuring two properties: a) the content returned by \textsc{Read} is written to a
    majority of the servers, so future reads will see it, and b) the content returned is a content
    written by some previous leader, ensuring consistency.

\begin{remark}
    Notice that until the new leader's first operation, the value of $bucket$ is potentially ``undetermined'',
    since the old leader might be in the middle of a \textsc{Write}. If the new leader starts with a \textsc{Read},
    the content that it will see depends on what replicas received the old leader's messages. Once the
    new leader recovers the bucket, the value is determined, and the old leader can't affect its content anymore.
\end{remark}

To sum up, following a leader's election, the first operation it performs on a given bucket will ``fix''
the content of that bucket. All future operations of the same leader are serialized. Therefore, all
operations on the bucket preserve linearizability, as required.

\begin{algorithm}[t]
\caption{Reconfiguration Read}\label{reconfigread}
\begin{algorithmic}[1]
\Procedure{ReconfigRead}{\textit{index}}\Comment{By leader}
\State $bucket \gets \textsc{Read}(index)$ \Comment{from $new$}
\If {$bucket.needs\_copy$}
\State $bucket \gets \text{\textsc{Read}($index$) from }old$ Bizur instance
\State $bucket.needs\_copy \gets false$
\State $\textsc{Write}(bucket)$ \Comment{to $new$}
\EndIf
\State return $bucket$
\EndProcedure
\end{algorithmic}
\end{algorithm}

\subsection{Reconfiguration}\label{sec:reconfig}
Bizur's support for reconfiguration (change in the cluster's members) was inspired by \cite{Lorch:2006:SWM:1218063.1217946}. The main idea is to have two Bizur instances running $old$ and $new$
and to transfer responsibility between them in a consistent and fault-tolerant fashion.

Consider a Bizur cluster $old$ containing a set of servers $O$, and consider a reconfiguration step
that would like to update the Bizur cluster to be the set of servers $N$. The following supports
$O$ and $N$ being any set of servers, they can be disjoint, intersect or even be the same set (\textit{i.e.,} $O=N$).

First step is to create a new Bizur instance $new$ running on the cluster $N$. The servers in $O \cap N$
will participate in both $old$ and $new$ instances. The $new$ instance is in a \textit{reconfig} state,
which means any request it receives will first copy the bucket from the $old$ instance, then handled by the $new$
instance. The second step notifies the $old$ instance to return a \textsc{ReconfigError} to all requests
(including ongoing ones).
When a client receives \textsc{ReconfigError} it will resend the message to the $new$ instance.
At this stage, we're guaranteed that no client requests will be processed by the $old$ instance.

The third step is to notify all clients that they should access the $new$ instance. This step isn't
always needed, because clients that will contact the $old$ instance will get an error telling them to
contact the $new$ instance. However, eventually the $old$ instance will be removed, and by then all
clients should be aware of the $new$ instance.

The last two steps are cleanup steps: first we wait until the $new$ instance finishes copying all
data from the $old$ instance (this can happen slowly in the background), after which its state changes
from \textit{reconfig} to \textit{normal}. Lastly, the $old$ instance
can be removed.

During the reconfiguration, there are two Bizur instances that are alive, each with its own fault-tolerance.
Therefore, both instances continue operating correctly, even in the presence of failures, making the
reconfiguration process fault-tolerant as well.

When the $new$ instance is in the \textit{reconfig} state, it keeps track of additional information
per bucket, stating if the bucket has already been copied or not. When the $new$ instance was created
all buckets were set to require copy from $old$. When a first access to a bucket occurs, the $new$
instance reads the bucket, sees that it requires copy from the $old$ instance, copies is, and writes
it to the $new$ instance, marking that it doesn't require copying. Note that the \textit{reconfig}
state ends only after all buckets have been copied.

Algorithm~\ref{reconfigread} describes the \textsc{ReconfigRead} method, which is used
instead of the regular \textsc{Read} method when in \textit{reconfig} state.

\hide{
\subsection{Multi-key Transactions}
Multi-key transactions allow to change multiple key-value pairs atomically.
They are an efficient tool in many use cases. In our
implementation we do not need multi-key transactions, since each key-value pair
represents ownership on a single file / directory, and we do not need to change ownership
on multiple files / directories in a transactional way.

Multi-key transactions can be implemented in Bizur as an additional (higher) layer,
by utilizing the leader to run a transactional protocol (somewhat similar to two-phase
commit) over the key-value pairs. The ability to perform conditional write
simplifies the transactional flow.

The details of the multi-key transaction implementation are outside the scope of this paper.
}

\hide{
\subsection{Data Persistence}
As mentioned in SubSection~\ref{subsec:motivation} there are two flavors of Bizur: an in-memory and
a persisted one. In accordance with allowing each IO to finish independently of others, Bizur persists
its state on a per bucket basis.
Bizur's persistent layer assumes good performance for random reads and writes; in our system we use SSDs.

There are three main on-disk structures:
\begin{enumerate}
    \item $head[]$ a list of pointers to buckets' content.
    \item The used blocks containing buckets.
    \item Free blocks, maintained as a free-list.
\end{enumerate}

When storing bucket $i$, Bizur first allocates free disk blocks needed for the bucket. It then serializes
the bucket's content to those blocks and updates $head[i]$ to point to the new blocks. Lastly, it frees the
old disk blocks which contained the bucket's old content, returning them to the free-list.

This double-update (first writing the content on free blocks, then updating the block-pointer)
ensures that Bizur's persistence is crash consistent, and that it will be able to recover consistently
from power failure.
}


\subsection{Optimizations}
When implementing the Bizur algorithm, we've employed multiple optimizations to improve the performance
and reduce the overhead. This section describes some of the optimizations.

The most straight-forward optimization is to avoid checking the leader is still the leader
(\lineref{line:leader_still_leader_1}-\lineref{line:leader_still_leader_2} in Algorithm~\ref{readbucket})
if \textsc{EnsureRecovery} indeed performed recovery. The recovery flow itself validates that the
leader still has a majority following it, and thus there is no need for an additional check.

A related optimization avoids sending the content of the bucket (from replicas to the leader)
when recovery is not needed. That is,
when \textsc{Read} needs to validate the leader still has a majority, there is no need to send the
content of the buckets to the leader (since the leader has the content locally); it is enough to
send \textsc{AckRead} without data.

The \textsc{Set} and \textsc{Delete} methods (see Algorithm~\ref{keyvalue}) first call
\textsc{Read}, then call \textsc{Write}. In normal operation (when recovery is not needed),
the \textsc{Read} path performs a cluster-wide validation that the leader has majority.
However, the \textsc{Write} will also perform this cluster-wide validation, as part of sending
the data to the replicas. To improve upon this,
if the bucket is already recovered, the leader skips the \textsc{Read} altogether, and uses its local
copy of the bucket instead. Recall that \textsc{Read}'s goal (when the bucket is already recovered)
is to guarantee the leader is still the leader. However, since replicas won't accept the write if
they have voted for a newer $elect\_id$, we get that guarantee during the write as well. This optimization
allows us to do just one cluster-wide message round-trip for \textsc{Set} and \textsc{Delete}, in
the common case of a bucket that has been recovered.

Regarding performing recovery on a bucket, we mentioned previously that it can be done lazily on the first
access to the bucket. However, the recovery process incurs additional latency, albeit minimal, and we
want to eliminate it when possible. To that end, when a leader gets elected, it will start a background process
that will slowly go over the entire data set, and recover all bucket. Thus ensuring that eventually
all buckets (even empty ones) have been recovered, and all future IOs
will not require recovery.

An optimization we haven't implemented, aims to reduce the load and time it takes for this background
process to run. The straightforward implementation recovers every bucket, as a separate operation. In some
cases, for example when there is almost no data, it is possible to batch the recovery of multiple buckets
into a single message round-trip.

A similar optimization, which we have implemented, applies to \textsc{IterateKeys}. Since \textsc{IterateKeys}
needs to go over all the buckets and extract the keys, it is very efficient to batch it. Moreover, since
\textsc{IterateKeys} just reads the buckets, we can do this batching at the leader. More precisely,
we need to check the leader is still the leader of the bucket range we're batching, then we can process
that entire range locally at the leader, without communicating with the other servers. This provides
are very efficient implementation of \textsc{IterateKeys}, both latency-wise and w.r.t. the load it
creates on the cluster.

\section{Scalability}\label{sec:scalability}
Bizur scales linearly, so long as the workload distributes more or less evenly across the key space.
Bizur's API lends itself nicely to sharding, since each range of buckets can be stored on a single
shard. Since the requirements around leadership are bucket-oriented (i.e., each bucket needs a single leader,
but different buckets can have different leaders), each shard can have its own leader; so long that
a given bucket is handled by a single shard.

Bizur can scale dynamically, by expanding and contracting. The scaling is shard based:
when the system first starts, all shards are on the same servers. As servers are added,
shards are migrated to those new servers. The shard migration
occurs using the same reconfiguration described in Section~\ref{sec:reconfig}. To shrink the system
back, the shards are migrated back, and the servers that don't have any shards on them can be removed.

Using a static number of shards (256 in our case) simplifies the code, yet allows it to grow to
very large clusters. It is possible to implement expanding/contracting by splitting and merging shards
to achieve a dynamic number of shards. We found the simplified version (static number of shards)
to be sufficient.

The \textsc{Get, Set} and \textsc{Delete} operations operate on a single key, and thus on a single bucket.
Hence, the more servers we have, the more operations we can do concurrently, keeping the same latency.
On the other hand,
the \textsc{IterateKeys} operation requires to go over the entire key set. The more servers we
have, the more work each \textsc{IterateKeys} needs to do. However, since the shards are independent,
the execution of a single \textsc{IterateKeys} operation can occur concurrently on all shards, thus
spreading the required work across the entire cluster.

\hide{
\begin{remark}
    Notice that if the workload doesn't distribute evenly across the shards, one can
    split the shards differently, according to the ``weight'' each bucket has. In our case, the
    workload does distribute evenly, so we haven't pursued this venue further.
\end{remark}
}

The ability to scale linearly is due to the less general API of Bizur.
This allows us to avoid reimplementing scaling
flows over and over again, in each application that uses Bizur.

\section{Testing}
To test the strong consistency (strict serializability) of the Bizur implementation, we've developed a testing
tool called \textit{Serialla}\footnote{Serialla is a generic tool, used to test strict serializability
of additional services in Elastifile's file-system, not just Bizur.}.

Serialla is a randomized testing
tool, that executes concurrent operations against the Bizur, while tracking the responses.
It then looks for a strict serializable execution that can explain the operations and responses.
If it finds such an execution, it continues testing with another batch of concurrent operations.
If no such execution exists, Serialla will report the problem, together with a descriptive log
of the concurrent operations; passing the responsibility on to a human developer to find and fix the bug.

Serialla tries to create as much \textit{chaos} as possible: developers can annotate areas of the code
which are sensitive to races. Serialla will then (randomly) try to explore these races, by scheduling
in / out the relevant threads. For example, when taking a lock, usually the lock will not be contended,
and regular tests will not explore the races around it. By using the annotation, Serialla will explore
different possible schedules (i.e., races) around the lock, even when it is not contented, thus
flushing out rare bugs.

\section{Experimental Evaluation}\label{sec:experimental}
The cluster consists of 4 Supermicro servers,
equipped with an Intel Xeon E5-2620 processor running at 2.40GHz. Each server
has a 10-Gigabit Intel 82599EB NIC, and a 128GB SanDisk SATA SSD device. All servers
run Linux CentOS 7.1 within a VM (NICs are configured with SR-IOV and SSDs are configured with passthrough).

Our tests utilize a single core on each machine, where 3 servers run the consensus algorithm,
and one server creates the load. All tests were done with 100\% writes (set operation) of small
values, 50 bytes each. We avoided
reads (get operations) since the current version of etcd performs reads directly from the leader,
without contacting the cluster, which doesn't preserve the same consistency level as Bizur and ZooKeeper do.

Three different key-value systems are evaluated: etcd v3.1.0-rc.1 (which uses the Raft consensus algorithm),
ZooKeeper 3.4.9 (which uses the Zab consensus algorithm) and Bizur\footnote{We
benchmark the persistent flavor of Bizur, to compare
    apples to apples (see Subsection~\ref{subsec:motivation}).}. etcd and ZooKeeper were evaluated
with their default configuration parameters.

\hide{
\begin{remark}
    As mentioned in Subsection~\ref{subsec:motivation}, we have two different usages of Bizur. One that is
    in-memory only, and another one that is persisted. In this entire
    section we benchmark only the persisted flavor of Bizur, to ensure our evaluation compares
    apples to apples.
\end{remark}
}

\hide{\begin{remark}}
    Section~\ref{sec:scalability} describes the ability of Bizur to scale by utilizing multiple shards.
    Since etcd and ZooKeeper do not support this, the following benchmarks contain a single shard only.
    Notice that Bizur can reach much higher throughput, while preserving the same low-latency, if it is allowed
    to use multiple shards.
\hide{\end{remark}}

The different systems
are written in different languages: Bizur is written in C, etcd is written in Go and ZooKeeper is written
in Java. The difference in programming language could affect the performance of the system, but
isn't expected to affect the behavior of a system. That is, etcd's lower throughput can be explained by
Go being slower than C; but etcd's throughput drop during leader failure is hard to attribute to the Go language,
and is most likely related to the Raft protocol itself.

First, we evaluate the effect on performance, of the concurrency and of the number of keys in the data
set (see Subsection~\ref{subsec:conandkeyset}). Second, we evaluate the effect of packet drops
on the performance of the different systems (see Subsection~\ref{subsec:packetdrop}).
Lastly, we show the behavior of the different systems when the leader fails
(see Subsection~\ref{subsec:leaderfailure}).

\subsection{Concurrency and Key-set Size}\label{subsec:conandkeyset}
Bizur's high concurrency stems from the independence of operations. Thus, to fully utilize Bizur,
there must be enough keys so that concurrent operations work on different keys. Figure~\ref{fig:keys}
shows the throughput, average latency and 99\textit{th} percentile latency for different number of
keys. The concurrency (queue depth) of operations is fixed to 64 throughout this benchmark.

For the Bizur system, once the key-set size reaches the concurrency (queue depth = 64), the average latency's
improvement - as well as the throughput's improvement - flattens out. This is expected, since once
each concurrent operation has a different key, Bizur is at it's optimal performance. The 99\textit{th} percentile
latency's improvement requires more keys to flatten out; this is also expected, since this percentile
measures boundary latencies, which happen when concurrency operations share the same key.
When the key-set size is close to the queue depth, it is still likely that there will be
concurrent operations sharing the same key.

Both ZooKeeper and etcd have a flat line w.r.t. throughput and average latency. This is expected,
as both algorithms serialize all operations, so the number of keys shouldn't affect the performance.
However, both algorithms have a somewhat odd behavior of the 99th percentile with larger number of keys:
the 99th percentile latencies decrease as the key-set size increases. We do not have a good explanation
for this behavior.

Figure~\ref{fig:qd} compares the performance as the concurrency increases. All
three systems present the same (expected) behavior:
the latency increases as the queue depth increases, while the throughput's improvement
flattens out at some point. Notice the Bizur's 99th percentile latency increases noticeably slower
than that of etcd and ZooKeeper, and are much closer to the average, meaning that the Bizur's latency
variance is much tighter than etcd's or ZooKeeper's.

We've chosen queue depth of 64 as a good trade-off
between throughput gain and latency cost. All following benchmarks are done with queue depth
of 64.

\begin{figure*}[!h]
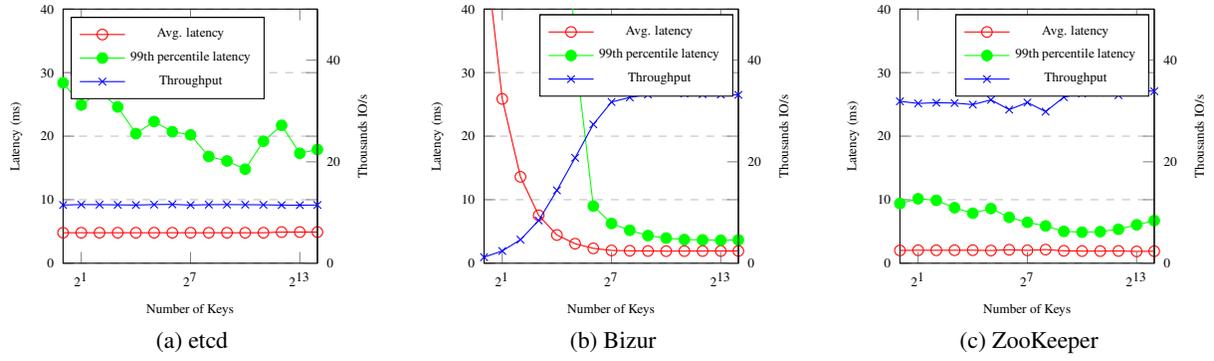

\begin{subfigure}{.335\textwidth}
  \centering
  \includegraphics[width=.9\linewidth]{etcd_iops_by_keys}
  \vspace{-5pt}
  \caption{etcd}
\end{subfigure}
\begin{subfigure}{.335\textwidth}
  \centering
  \includegraphics[width=.9\linewidth]{bizur_iops_by_keys}
  \vspace{-5pt}
  \caption{Bizur}
\end{subfigure}%
\begin{subfigure}{.335\textwidth}
  \centering
  \includegraphics[width=.9\linewidth]{zookeeper_iops_by_keys}
  \vspace{-5pt}
  \caption{ZooKeeper}
\end{subfigure}
\vspace{-8pt}
\caption{Effect of number of keys (queue depth = 64)}
\label{fig:keys}
\end{figure*}
\vspace{5pt}

\begin{figure*}[!h]
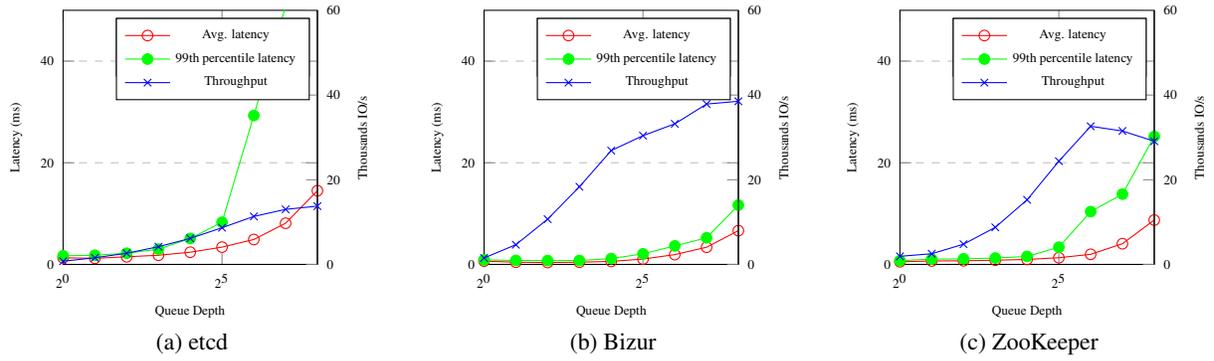

\begin{subfigure}{.335\textwidth}
  \centering
  \includegraphics[width=.9\linewidth]{etcd_iops_by_qd}
  \vspace{-5pt}
  \caption{etcd}
\end{subfigure}
\begin{subfigure}{.335\textwidth}
  \centering
  \includegraphics[width=.9\linewidth]{bizur_iops_by_qd}
  \vspace{-5pt}
  \caption{Bizur}
\end{subfigure}%
\begin{subfigure}{.335\textwidth}
  \centering
  \includegraphics[width=.9\linewidth]{zookeeper_iops_by_qd}
  \vspace{-5pt}
  \caption{ZooKeeper}
\end{subfigure}
\vspace{-8pt}
\caption{Effect of queue depth (number of keys = 16,000)}
\label{fig:qd}
\end{figure*}
\vspace{5pt}

\begin{figure*}[!h]
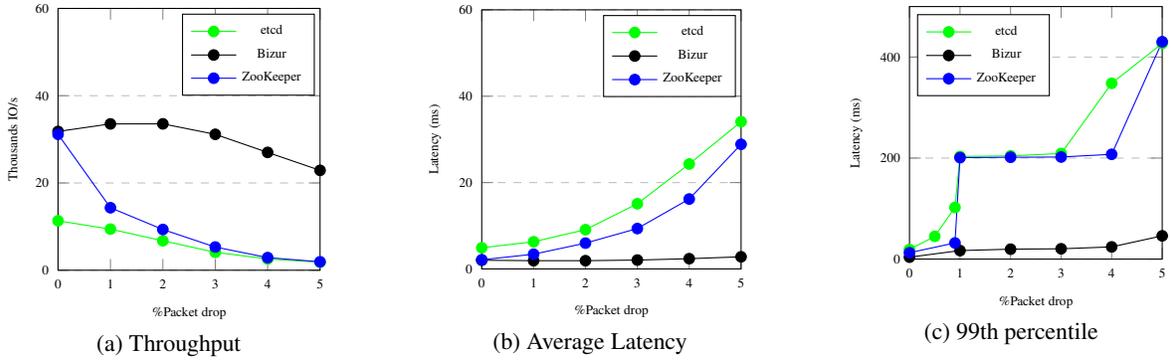

\begin{subfigure}{.335\textwidth}
  \centering
  \includegraphics[width=.8\linewidth]{packet_loss_throughput}
  \vspace{-5pt}
  \caption{Throughput}
\end{subfigure}
\begin{subfigure}{.335\textwidth}
  \centering
  \includegraphics[width=.8\linewidth]{packet_loss_latency}
  \vspace{-5pt}
  \caption{Average Latency}
\end{subfigure}
\begin{subfigure}{.335\textwidth}
  \centering
  \includegraphics[width=.8\linewidth]{packet_loss_99_percentile}
  \vspace{-5pt}
  \caption{99th percentile}
\end{subfigure}
\vspace{-8pt}
\caption{Effect of packet drop}
\label{fig:packet_loss}
\end{figure*}
\vspace{5pt}

\begin{figure*}[!h]
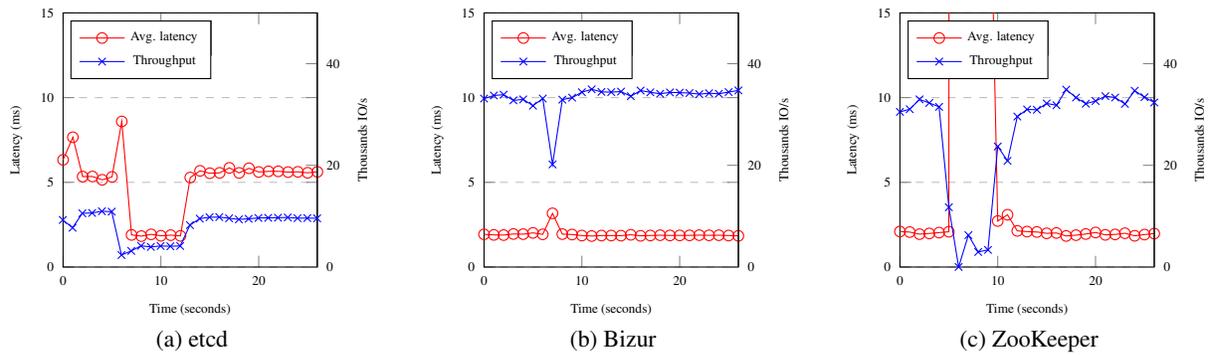

\begin{subfigure}{.335\textwidth}
  \centering
  \includegraphics[width=.9\linewidth]{etcd_leader_change}
  \vspace{-5pt}
  \caption{etcd}
\end{subfigure}
\begin{subfigure}{.335\textwidth}
  \centering
  \includegraphics[width=.9\linewidth]{bizur_leader_change}
  \vspace{-5pt}
  \caption{Bizur}
\end{subfigure}%
\begin{subfigure}{.335\textwidth}
  \centering
  \includegraphics[width=.9\linewidth]{zookeeper_leader_change}
  \vspace{-5pt}
  \caption{ZooKeeper}
\end{subfigure}
\vspace{-8pt}
\caption{Effect of Leader Failure}
\label{fig:leader_change}
\end{figure*}

\subsection{Packet Drops}\label{subsec:packetdrop}
In this benchmark we compare the behavior of the systems under packet drop, ranging from 0\% drop
and up to 5\% packet drop. Packet drops are expected to have little effect on Bizur, due to independence
of operations. Thus, if one operations slows down due to a network issue, other operations just continue
running. This expected behavior can be seen in Figure~\ref{fig:packet_loss}, where the average
latency is increased slightly, and the throughput decreases by about 25\%.

ZooKeeper and etcd, on the other hand, have a much more noticeable performance drop. ZooKeepers'
average latency goes up from 2.1 milliseconds to 28 milliseconds, more than x10. etcd's
average latency goes up from 4.9 milliseconds to 34, a x7 increase.
Throughput of ZooKeeper degrades from 31K to 2K, a factor of x15, while etcd degrades from 11K to 2K,
a factor of x5.

Figure~\ref{fig:packet_loss}.c displays the 99\textit{th} percentile.
Both ZooKeeper and etcd have an odd behavior: there are two sharp jumps in latency
around 1\% packet drop, then a again around 3-4\%, eventually reaching more than
400 millisecond latency. It is not clear what causes this behavior, we postulate it might be an
artifact of the TCP implementation in the kernel.

Bizur's 99\textit{th} percentile latency behaves much better:
increasing steadily up to about 45 milliseconds for 5\% packet drop.
Bizur uses UDP for its networking, which is aligned with the previous assumption about the TCP implementation.

The packet drop test shows the advantage Bizur has over distributed log based consensus algorithms, like
Raft and Zab. The false and real dependencies incurred by the distributed log are clearly evident in
the behavior during packet drop. Bizur, on the other hand, behaves very well as the packet drop increases.

\subsection{Leader Failure}\label{subsec:leaderfailure}
In this benchmark we compare the behavior of the systems when the leader fails.
We measure 5 seconds of performance prior to killing the leader, and another 20 seconds
following the leader's failure.

As can be seen in Figure~\ref{fig:leader_change}, Bizur has a short drop in throughput, then
goes back to normal operation. ZooKeeper takes about 7 seconds to go back to normal
operation, with large drops in throughput and huge latencies following the failure. ZooKeeper
doesn't provide any IOs for the first second after the failure.

etcd's throughput goes down by about 60\%, and takes 7 seconds to go back up to its pre-failure
values. etcd has a peak in latencies immediately following the failure, after which the latencies
go down to levels lower than before the failure. As the throughput returns to normal, so do the
latencies. Interesting to notice that the lower latencies etcd achieves are the normal latencies
ZooKeeper and Bizur have.

This test emphasizes the benefit of the Bizur algorithm: due to its built-in concurrency design,
it has low cost of leader-change, and can thus
support very short failure detection timeouts. That's why it can return to normal operations so
quickly.

\section{Related Work}
Consensus algorithms have attracted a lot of attention in areas requiring high-availability
and fault tolerance. Mainly because they provide the ability to replicate a service across multiple
servers, in a way that allows the service to continue working when some of the servers fail.
This scheme of using consensus to achieve a replicated service is called ``state machine replication'' (SMR)
\cite{schneider1990implementing}.

Multiple algorithms fall in the SMR category:
the original Paxos \cite{lamport1998part}, ZooKeeper's \cite{hunt2010zookeeper} consensus algorithm
Zab \cite{junqueira2011zab}, Raft \cite{ongaro2014search} which aims to simplify Paxos,
and Viewstamp replication \cite{liskov2012viewstamped}.

Among the above, Paxos got the most attention, and is used in large-scale
production environments \cite{burrows2006chubby,chandra2007paxos,corbett2013spanner} as well
as by many distributed file-systems \cite{nutanixbible, cephusingpaxos, xtreemfsusingpaxos, infinituseingpaxos}.

Many research papers have improved upon the original Paxos. For example, Fast paxos \cite{lamport2006fast}
which adds ``fast rounds'' that reduces the number of message delays required to learn a value,
or Mencius \cite{mao2008mencius} which round-robins the proposing server, to evenly spread the leadership
load. Such improvements still suffer from both the false and the real dependencies (see Subsection~\ref{subsection:distlog}).

Others have improved Paxos to be more concurrent, by removing the false dependencies between log entries.
For example, Generalized Paxos \cite{generalized-consensus-and-paxos} allows concurrent log entries
as long as they commute. Multicoordinated Paxos \cite{camargos2007multicoordinated} and
Egalitarian Paxos \cite{moraru2013there} achieve similar improvements.

The above still suffer from real dependencies between log entries, and as such have the drawbacks
mentions in Subsection~\ref{subsection:distlog}. Algorithms that rely on distributed log in some form
or another will not be able to get rid of real dependencies, as they're essential to the manner
in which the state is replicated.

Others have taken atomic memory as their building block. Examples include RAMBO \cite{gilbert2010rambo}
and a similar work \cite{aguilera2011dynamic} showing reconfiguration without consensus. Both support
\textit{read} and \textit{write} operations, but do not support \textit{conditional write}, which
is an important primitive for distributed services. In addition, they concentrate on individual objects
and do not handle having multiple objects and addressing them.

DO-RAMBO \cite{georgiou2009developing} handles multiple objects. However, it still doesn't support
\textit{conditional write}, and it assumes a fixed namespace of the multiple objects it handles.
Thus, a general key-value cannot be used directly, since each key must be known as part of the fixed namespace.

Another system providing consistent key-value like behavior appears in \cite{van2004chain},
which doesn't support \textit{conditional write} but it seems that it can be added easily.
However, \cite{van2004chain} assumes FIFO links, which incur false dependencies between operations.
It might be possible to remove this assumption, but it would probably make the failure handling much
more complex.

Another area, which is sometimes mixed with the above consensus-related field, is that of
object-store systems like Cassandra \cite{lakshman2010cassandra} or MongoDB \cite{banker2011mongodb}.
These systems were designed to provide availability over consistency w.r.t.
the \textit{CAP theorem} \cite{gilbert2002brewer}.
As such they originally were not strongly consistent. There are conflicting claims regarding their
consistency today. Specifically, claims of strong consistency have been refuted (see \cite{aphyrcassandra,aphyrmongodb}).\footnote{
    ZooKeeper, on the other hand, has passed the strong consistency tests \cite{aphyrzookeeper}.
}

Lastly, high-throughput systems like \cite{199315,dragojevic2015no}
assume an external configuration service (i.e., Paxos),
while systems like \cite{zhang2013transaction,lloyd2011don} add additional assumptions to the consistency model.

\section{Conclusion}
State machine replication (SMR) and Paxos-like consensus algorithms are used in many distributed
systems in general, and in scale-out file-systems in particular. The general data-model
supported by such algorithms imposes performance constraints: limited scalability and concurrency,
as well as latency issues during failures or packet drops.

However, when looking closely at the requirements of distributed systems, and especially at
distributed file-systems, it is possible to weaken the data model to a key-value data model.
Bizur is a consensus algorithm that exposes a key-value API, and overcomes the scalability
limits of Paxos-like algorithms, as well as guarantees low-latency even during failures.

In some systems, SMR is used as an underlying infrastructure of a key-value interface (e.g., etcd).
In such places, replacing the key-value layer - that rides on top of a distributed log consensus algorithm -
with Bizur doesn't even weaken the data model. It just gains performance.

Through comparative benchmarking with ZooKeeper and etcd, we've shown that Bizur outperforms them
in throughput, average
latency and 99th percentile latency; both in regular operation, during network packet drops
and especially during server failure.

\section{Acknowledgments}
We would like to thank Assaf Yaari, Erez Yaffe, Eli Weissbrem, Amir Levy, Nadav Shemer,
Shai Koffman and Shahar Frank for their feedback and help.

{\footnotesize \bibliographystyle{acm}
\bibliography{bizur}}

\end{document}